\newtheorem{thm}{Theorem}[section]
\newtheorem{prop}[thm]{Proposition}
\newtheorem{defn}[thm]{Definition}
\newtheorem{clm}[thm]{Claim}
\newcommand{\cl}[1]{\mathcal{#1}} 
\newcommand\encircle[1]{\tikz[baseline=(char.base)]{
		\node[shape=circle,draw,inner sep=0pt] (char) {#1};}}
\newcommand\blfootnote[1]{%
	\begingroup
	\renewcommand\thefootnote{}\footnote{#1}%
	\addtocounter{footnote}{-1}%
	\endgroup
}
\renewcommand\@date{{%
		\vspace{-\baselineskip}%
		\large\centering
		\begin{tabular}{@{}c@{}}
			Yakov Babichenko \\
			\small yakovbab@tx.technion.ac.il
		\end{tabular}%
		\quad 
		\begin{tabular}{@{}c@{}}
			Oren Dean \\
			\small orendean@campus.technion.ac.il
		\end{tabular}
		\quad 
		\begin{tabular}{@{}c@{}}
			Moshe Tennenholtz \\
			\small moshet@ie.technion.ac.il
		\end{tabular}
		
		\bigskip
		
		Faculty of Industrial Engineering and Management\par
		Technion --- Israel Institute of Technology\par
		Haifa, Israel
		
		
		\bigskip
		
		\today
}}
\begin{document}
\title{Paradoxes in Sequential Voting\blfootnote{This project has received funding from the European Research Council (ERC) under the European Union's Horizon 2020 research and innovation programme  (grant agreement n 740435).}}


\maketitle
\begin{abstract}
We analyse strategic, complete information, sequential voting with ordinal preferences over the alternatives. We consider several voting mechanisms: plurality voting and approval voting with deterministic or uniform tie-breaking rules. We show that strategic voting in these voting systems may lead to a very undesirable outcome: Condorcet-winner alternative might be rejected, Condorcet-loser alternative might be elected, and Pareto dominated alternative might be elected. These undesirable phenomena occur already with four alternatives and a small number of voters. For the case of three alternatives we present positive and negative results.	
 \end{abstract}

\section{Introduction}
Traditionally, most game-theoretic models of voting study voting in a simultaneous setting. The reason is, no doubt, due to the fact that we perceive confidentiality as a necessary condition for the fairness of the election process. However, we do see many issues decided upon by means of \emph{sequential} voting. In small committees and even in parliaments, sequential, open ballot is often the default method to make a decision. \\
When we speak of a possible voting outcome we mean the result when the voting is in an equilibrium. However, while simultaneous voting translates to a \emph{normal-form game} and its solution relies on the notion of Nash equilibrium, the sequential voting translates to an \emph{extensive-form game} for which there is the stronger notion of a subgame perfect Nash equilibrium (SPE). This equilibrium is guaranteed to always exist, and when the voters' preferences are totally ordered, it also leads to a unique outcome. \\
While on the one hand a lot of work has been done in axiomatic comparison of simultaneous voting systems (e.g., \cite{FM14}) and on the other hand there has been interest in sequential voting with strategic voters (for instance, \cite{DE10,CX}), the understanding of sequential voting, especially in the case of more than two alternatives, is very limited. As a first step toward the understanding of sequential voting, we analyse it in a complete information setting. Though in voting interactions with many voters this assumption is unrealistic, in small committees it is more plausible.\\
The goal of this paper is to understand how good the performance of sequential voting in scenarios with a small number of voters and a few alternatives (but not necessarily two) is.\\
We consider a setting of voters who have ordinal preferences over the alternatives. In such a setting it is not clear how to measure the quality of the elected alternative. However, there are some scenarios where \begin{enumerate*}[label=\itshape\alph*\upshape)]
	\item it is clear which alternative is desirable, as for instance, if there is a Condorcet winner (Definition~\ref{dfn: Condorcet winner});
	\item it is clear which alternative is undesirable, as for instance, a Condorcet loser (Definition~\ref{dfn: pw loser}) or a Pareto dominated alternative (Definitions~\ref{dfn: weakly dominated} and~\ref{dfn: strongly dominated}).
\end{enumerate*}
We pose the questions: does sequential voting guarantee a choice of a desirable alternative whenever it exists? Does sequential voting guarantee that an undesirable alternative will not be elected?\\
Our results are mainly negative. Although in a two-alternatives setting sequential voting works perfectly well, already with four alternatives the choice of (rejection of) a desirable (undesirable) alternative is not guaranteed. In case of three alternatives, we have both positive and negative results.

\subsection{Related work}
The field of \emph{social choice} deals with mechanisms to collect agents' preferences about different alternatives, and aggregate them to a single ranking (see the classical works~\cite{Arrow1951-ARRIVA} and~\cite{farquharson1969theory} as well as the more recent~\cite{QJPS-13006}). Among the vast branches of this field is voting theory, which deals with axioms for `good' voting rules (e.g., \cite{BCF14, Trust}) and the analysis of voting mechanisms, scoring rules and the like (e.g., \cite{FM14, LP15}). In this paper we restrict our attention to voting mechanisms in which the voters vote sequentially.\footnote{Notice that the term `sequential voting' is often used with respect to a completely different setting, where the voting procedure is divided into several stages, at each stage the voting is for a subset of the alternatives; see, for example,~\cite{CLX09, LXY07}.}\\
Desmedt and Elkind~\cite{DE10} analyse simultaneous and sequential voting systems employing the plurality voting rule. In their model a voter's preference is a total ordering of the power set of the alternatives such that it is consistent with a total ordering of the alternatives themselves; thus their model accepts draws as a legitimate result. They assume the voters are abstain-biased, i.e. a voter will not vote unless pivotal. They prove that in a plurality sequential voting with two alternatives, the winner is the most popular (if one exists), and the voters are truthful whenever they vote. They then show that these nice properties no longer hold when there are three alternatives. For instance, a voter may strategically abstain or vote for an alternative which will not be selected.\\
Conitzer and Xia \cite{CX} discuss sequential voting under a wide range of voting rules, classified by their domination-index (defined to be the smallest number such that any coalition of this size can make any alternative a winner). They prove a general necessary criterion for an alternative to be a winner, and show a voting scenario with an arbitrary number of alternatives  such that the winner is Pareto dominated by all alternatives but one.\\
Dekel and Piccione \cite{DP00}, and later Battaglini \cite{B05}, investigate the information cascade in sequential voting with two alternatives, and compare its equilibria to the symmetric equilibria of simultaneous voting.

\subsection{Our contribution}
In this paper we take a deeper look at the strategic ballots of voters in sequential voting systems. We investigate three types of paradoxical outcomes, namely --- Condorcet winner which is not elected, Condorcet loser which is elected and Pareto dominated alternative which is elected. We consider four types of voting systems which are the four combinations of plurality/approval voting with deterministic/uniform tie-breaking. For each pair of paradox and voting system we show a minimal example (with respect to the number of alternatives) for which there is a scenario with this paradox.

\section{Model and Definitions}\label{sec: classical voting}
Denote by $ V=\{v_1,\ldots,v_n\} $ the set of voters and by $ A=\{a_1,\ldots,a_m\} $ the set of alternatives. We assume that for any $ v_i\in V $, his preferences over the alternatives can be expressed as a total ordering $ (A,\succ_{v_i}) $. We denote by $ \rho=\{\succ_{v_i}\}_{i=1}^{n} $ the collection of the voters' preferences, and call the triplet $ (V,A,\rho) $ a \emph{voting profile}. We assume the voters have complete information. The voters cast their ballots sequentially and publicly (i.e., each voter knows the ballots of those who precede him).
We consider two types of voting rules:
\begin{enumerate}[label=\arabic*.), wide, labelwidth=!, labelindent=0pt]
	\item \emph{Plurality voting:} each voter, in his turn, votes to at most one alternative (abstentions allowed). 
	\item \emph{Approval voting:} each voter may vote to any subset of the alternatives.
\end{enumerate}
In order for a voting system to be well defined we must also define a tie-breaking rule. We consider two tie-breaking rules:
\begin{enumerate}[label=\arabic*.), wide, labelwidth=!, labelindent=0pt]
	\item \emph{Deterministic:} there is a predefined total ordering of the alternatives. The elected alternative is the one that is highest ranked among all alternatives that received the highest number of votes.
	\item \emph{Uniform:} randomly and uniformly pick a winner from all those who received the highest number of votes. In this case we assume that each agent tries to maximize the probability of election of his most favourite alternative and then that of his second-most favourite and so on. Equivalently, we may think of the result as a `winning set' instead of a winning alternative, and ``lift'' the relations of $ \rho $ to be relations on the power set of $ A $ in the following manner. For any $ S\subseteq A $ let $ T_{v_i}(S) $ be the most favourite alternative of $ v_i $ in $ S $. Then for two non-empty subsets $ S_k,S_\ell $,	\begin{flalign*}
	S_k&\succ_{v_i}S_\ell \iff \\
	&[T_{v_i}(S_k)\succ_{v_i}T_{v_i}(S_\ell)] \text{ or } \\
	&[T_{v_i}(S_k)=T_{v_i}(S_\ell) \text{ and } |S_k|<|S_\ell|]\text{ or }\\
	& [T_{v_i}(S_k)=T_{v_i}(S_\ell) \text{ and } |S_k|=|S_\ell| 
	\\&\qquad\text{ and } (S_k-T_{v_i}(S_k))\succ_{v_i}(S_\ell-T_{v_i}(S_k))].
	\end{flalign*}
It's not hard to verify that this relation defines a total ordering.
\end{enumerate}
Thus, we have four combinations of voting rules and tie-breaking rules, which define four different \emph{voting systems}. Game-theoretically, for a given voting profile and a voting system, we have a multi-stage game, describable as an \emph{extensive-form game} --- a tree with all possible voting-sequences, and an outcome at every leaf. The standard solution for this kind of game is a \emph{subgame perfect equilibrium} (SPE). To find an SPE, we start with the last voter. For every voting history, we assume this voter will choose a ballot which gives him the best outcome.\footnote{Although different ballots might lead to the same outcome, the important thing here is the existence and uniqueness of the best outcome.} Moving to the next-to-last voter, we know, for every voting history, how the last voter will respond to any of his ballots. Thus, we can find his best possible outcome for any sequence of voting history. We can continue this backward process until we find the unique outcome that can be achieved in a voting sequence in which every voter selects a best ballot. \\

We now present the paradoxes against which we examine our voting systems.
Let $ (A,\succ_{pw}) $ be the binary relation defined by the pairwise comparison of alternatives. That is, for any $ a_i,a_j\in A $
\begin{flalign*}
 &a_k\succ_{pw} a_\ell \iff \\
 &|\{v_i\in V: a_k\succ_{v_i}a_\ell \}|>|\{v_i\in V: a_\ell\succ_{v_i}a_k \}|. 
\end{flalign*}
An alternative $ a_k $ is called a \emph{Condorcet winner} of $ (V,A,\rho) $ if it beats pairwise all other alternatives, i.e. if $ a_k\succ_{pw}a_\ell $, $ \forall \ell\neq k $. 
\begin{defn}\label{dfn: Condorcet winner}
	A \emph{Condorcet winner paradox} is a voting scenario in which a Condorcet winner of $ (V,A,\rho) $ is not part of the winning set.
\end{defn} 
An alternative $ a_k $ is called a \emph{Condorcet loser} of $ (V,A,\rho) $ if it loses pairwise to all other alternatives, i.e. if $ a_\ell\succ_{pw}a_k $, $ \forall \ell\neq k $. 
\begin{defn}\label{dfn: pw loser}
	A \emph{Condorcet loser paradox} is a voting scenario in which a Condorcet loser of $ (V,A,\rho) $ is the lone winner.
\end{defn} 
An alternative $ a_k $ is \emph{Pareto dominated} by alternative $ a_\ell $ in $ (V,A,\rho) $ if for all $ v_i\in V $, $ a_\ell\succ_{v_i}a_k $. 
\begin{defn}\label{dfn: weakly dominated}
	A \emph{Pareto-dominated weak paradox} is a voting scenario in which a Pareto dominated alternative in $ (V,A,\rho) $ is part of the winning set.
\end{defn} 
\begin{defn}\label{dfn: strongly dominated}
	A \emph{Pareto-dominated strong paradox} is a voting scenario in which a Pareto dominated alternative in $ (V,A,\rho) $ is the lone winner.
\end{defn} 
When the tie-breaking is deterministic the winning set is always a single alternative; hence in this case there is no difference between the Pareto-dominated weak paradox and the Pareto-dominated strong paradox.
\section{The Condorcet Winner and the Condorcet Loser Paradoxes}
When there are only two alternatives the Condorcet winner and Condorcet loser paradoxes actually describe the same situation in which an alternative which is the most favourite by a majority of the voters, is not elected. It is not hard to see that this is impossible in the voting systems we consider here (see Corollary 1 of \cite{DE10} for plurality voting). Here we show that already when there are three alternatives the classical voting systems are no longer resistant to those paradoxes. In all the examples, we list the voters' preferences from high to low. We circle the votes in an SPE.
\begin{clm}\label{prp: plurality Condorcet winner and underdog}
	Already with three alternatives, there are examples of the Condorcet winner paradox and the Condorcet loser paradox, in all of our voting systems.
\end{clm}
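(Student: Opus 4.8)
The plan is to exhibit explicit small voting profiles on three alternatives and then verify that the unique SPE outcome realizes each paradox. Since the claim asserts existence of \emph{both} paradoxes across \emph{all four} voting systems, I would construct concrete witness examples rather than argue abstractly. First I would fix a natural candidate profile — something like the classic Condorcet cycle over $\{a_1,a_2,a_3\}$, where pairwise majorities cycle, since such cycles are exactly what break the ``nice'' two-alternative behavior. To produce a genuine Condorcet \emph{winner} (rather than a cycle), I would perturb the profile so that one alternative, say $a_1$, beats both others pairwise, and then tune the \emph{order} of voters and the tie-breaking priority so that backward induction elects a different alternative.

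The core of the argument is the backward-induction computation described in the model section. For each example I would carry out the SPE analysis exactly as prescribed: start from the last voter $v_n$, who for every voting history picks the ballot maximizing his induced preference (in the uniform case, using the lifted order on subsets defined in the excerpt), then fold back to $v_{n-1}$, and so on to $v_1$. The goal is to show the folded-back outcome at the root is the ``wrong'' alternative — one that is not the Condorcet winner (for the Condorcet-winner paradox) or one that is the Condorcet loser (for the Condorcet-loser paradox). Because the number of voters and alternatives is tiny, the game tree is small enough that I expect to be able to display the relevant branches and the circled SPE ballots directly, as the paper promises (``We circle the votes in an SPE'').

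I would organize the write-up by voting system, ideally reusing a single clever profile across several systems to keep the presentation compact; for instance, a profile whose SPE outcome is robust to whether ties are broken deterministically or uniformly would let me collapse two cases into one. The deterministic cases should be the easiest, since the outcome at each node is a single alternative determined by the tie-breaking order; the uniform cases are more delicate because voters compare \emph{sets} of alternatives via the lifted preference, so a voter may prefer to create or avoid a tie. For approval voting I would further need to track that each voter can approve arbitrary subsets, which enlarges the action set at each node but often makes the strategic manipulation \emph{easier} to arrange, not harder.

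The main obstacle I anticipate is not any single hard idea but rather the \emph{search} for minimal examples together with the bookkeeping of the backward induction: one must verify that at every history the claimed best response really is optimal, including off-path behavior that supports the SPE, and in the uniform tie-breaking systems one must apply the three-way lexicographic comparison on subsets (top alternative, then cardinality, then recursively) correctly at each node. I would mitigate this by choosing profiles where the pivotal decisions are lopsided enough that the comparisons are unambiguous, and by presenting the game tree or the induced subgame values explicitly so each inductive step can be checked by inspection.
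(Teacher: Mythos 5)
Your proposal correctly identifies the method the paper uses---exhibit explicit small profiles and verify the SPE outcome by backward induction---but it stops at the level of a plan and never executes it. No concrete profile is written down for any of the required (voting system, paradox) combinations, and no SPE is actually computed. Since the claim is a pure existence statement, the witnesses \emph{are} the proof; describing how one would search for them, how one would organize the write-up, and which difficulties one anticipates establishes nothing. The paper's proof consists precisely of the content you defer: seven explicit tables, one of which does the double duty you suggest (a single plurality profile read either as five voters with uniform tie-breaking or as four voters plus a deterministic tie-breaker), together with case-by-case deviation arguments, including the off-path threats that sustain the equilibrium (e.g., ``if Voters 2, 3 vote for $A$, then Voter 4 votes for $B$, and since Voter 5 prefers $B$ over $A$, $B$ is elected'').

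Two further points your plan glosses over and that the actual verification must confront. First, in the uniform tie-breaking systems the two paradoxes are asymmetric by definition: the Condorcet winner paradox requires the Condorcet winner to be \emph{excluded from} the winning set, while the Condorcet loser paradox requires the Condorcet loser to be the \emph{lone} winner; a profile whose SPE outcome is a tie containing the loser would not suffice, and the paper's uniform examples (Tables~\ref{exm: plurality-uniform Condorcet loser} and~\ref{exm: approval-uniform Condorcet loser wins}) are engineered so that the dominated ballots of early voters make the loser the unique winner. Second, the deviation analysis is not mere bookkeeping: at several histories the argument hinges on coalitional threats spanning two or three later voters (e.g., in Table~\ref{exm: approval-deterministic Condorcet loser wins}, if Voters~3, 4 defect then Voter~5 votes so as to force Voter~6 into a $B$-versus-$C$ choice), and finding profiles in which these threats are credible under the lifted set-preference order is exactly where the work lies. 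Your approach is the right one, but as written the proposal contains a plan for a proof, not a proof.
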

\begin{proof}
	Consider the sequential plurality voting scenario in Table~\ref{exm: plurality-deterministic Condorcet winner}, which can either be seen as a five-voter uniform tie-breaking scenario or a four-voter and deterministic tie-breaker (in which case the last voter is the tie-breaker).

It is easy to verify that $ A $	is a Condorcet winner in either case. Since Voter~1 gets his most preferred outcome, his ballot is best possible. If Voters~2, 3 try to improve the outcome by voting for $ A $, then Voter~4 will vote for $ B $ and since Voter~5 (or the tie-breaker) prefers $ B $ over $ A $, $ B $ will be elected. Since $ B $ is worse than $ C $ for Voters~2, 3, their vote for $ C $ is best. Voter~4 cannot change the result since $ C $ is most preferred by Voter~5 (tie-breaker).\footnote{Interestingly, $ C $ is Voter 1's most preferred alternative, but to get it elected he must vote for $ B $ --- his least preferred alternative.}\\
The two scenarios shown in Tables~\ref{exm: plurality-deterministic Condorcet loser} and~\ref{exm: plurality-uniform Condorcet loser} demonstrate a situation where a Condorcet loser wins the election under the plurality-deterministic and plurality-uniform voting systems. In both of them $ C $ is a Condorcet loser. After the first three voters vote for $ C $, this alternative must be elected (because $ C $ is in top priority of the tie-break in case of deterministic tie-breaking; and in case of uniform tie-breaking, it is enough that $ C $ is in top priority of another voter.). Since $ C $ is in top priority of Voter 2, it is enough to show that voters 1 and 3 cannot get a better results. \\
In the deterministic scenario, any other vote of Voter 3 will force Voter 6 to join voters 4 and 5 in their vote for $ A $, since otherwise alternative $ C $ will be elected by tie-breaking. Voter~1 cannot get a better result since Voter~2 can vote for $ B $ which is a best outcome for voters 3 and 6 and is preferred by the tie-break over $ A $. \\
In the uniform scenario, if Voter~3 votes for $ B $, both voters 4 and 5 will vote for $ A $ and Voter~7 will be forced to join them to get the outcome $ \{A,C\} $, which is worse for Voter~3 than just $ C $. If Voter~1 tries to get $ A $ elected, Voter~2 will vote for $ B $ and he will be joined by voters 3, 6 and 7 to get $ B $ as the lone winner, which is the worst outcome for Voter~1.


Below we analyse the SPEs in the remaining four examples.\\

\emph{A Condorcet winner paradox  in approval-deterministic voting (Table~\ref{exm: approval-deterministic Condorcet winner loses})}. Clearly, any ballot of Voter~4 leads to the election of $ C $. Voters~2, 3 cannot get $ A $ elected because $ B $ will have the same number of votes as $ A $, and the tie-breaking gives preference to $ B $. Voter~1 gets his best outcome, hence the voting is in equilibrium.\\

\emph{ A Condorcet winner paradox in approval-uniform voting (Table~\ref{exm: approval-uniform Condorcet winner loses})}. Any vote of Voter~5 gives the same outcome. Voters~3, 4 cannot get $ A $ in the winning set because $ B $ will have more votes than $ A $. Voters~1, 2 get their best result and hence the voting is in equilibrium.\\

\emph{A Condorcet loser paradox in approval-deterministic voting (Table~\ref{exm: approval-deterministic Condorcet loser wins})}. Before the vote of Voter~5, $ C $ has already four voters, thus Voters~5, 6 cannot get a better outcome, even if they co-operate. If Voters~3, 4 try to get $ A $ elected by voting for $ A $ and removing one or two of their votes for $ C $, then Voter~5 will vote for $ B $ or for $ \{B,C\} $ so that Voter~6 will be faced with a choice between $ B $ and $ C $.\footnote{The voting vector before the vote of Voter~6 will be $ (2,3,3) $ and $ C $ is the most preferred alternative in the tie-breaking order.} Voter~6 prefers $ B $ over $ C $, but $ B $ is the worst outcome for Voters~3, 4, thus their vote for $ C $ is their best choice. Voters~1, 2 cannot get $ B $ elected, because Voters~3, 4 and 6 can vote for their most preferred alternative, $ A $, which is also preferred over $ B $ in the tie-breaking order.\\

\emph{A Condorcet loser paradox in approval-uniform voting (Table~\ref{exm: approval-uniform Condorcet loser wins})}. Again, the last voter has no ballot which prevents the worst outcome for him. Voters~5, 6 cannot get $ A $ in the winning set, since alternative $ B $ will have more votes anyway. Voter~4 gets his most preferred outcome. Voter~3 has no good ballot; if, for example, he votes for $ B $, then Voter~4 will simply remove his vote for $ B $ and the rest of the ballots will remain the same. If Voters~1, 2 try to get $ B $ in the winning set by removing their votes for $ C $, alternative $ A $ will be the lone winner with the votes of Voters~3-6; since this is the worst outcome for Voters~1, 2, they are better off with their current vote.\\

	\begin{minipage}[h]{\columnwidth}
		\centering
		\begin{tabular}{rccc}
			Voter 1 & C  & A & \encircle{B} \\
			Voter 2 & A & \encircle{C} & B \\
			Voter 3 & A & \encircle{C} & B \\
			Voter 4 & \encircle{B} & A & C \\
			Voter 5/Tie-Breaker & \encircle{C} & B & A
		\end{tabular}		
		\captionof{table}{Plurality voting. $ A $ is a Condorcet winner, $ C $ is elected.}
		\label{exm: plurality-deterministic Condorcet winner}		
\end{minipage}
\noindent\rule{\columnwidth}{0.4pt}\vspace{2mm}
	\begin{minipage}[h]{\columnwidth}
		
		\begin{multicols}{2}
			\centering
			\begin{tabular}{rccc}
				Voter 1 & A & \encircle{C} & B \\
				Voter 2 & \encircle{C} & B & A \\
				Voter 3 & B & \encircle{C} & A \\
				Voter 4 & \encircle{A} & B & C \\
				Voter 5 & \encircle{A} & B & C \\
				Voter 6 &  \encircle{B} & A & C \\
				Tie-Break & \encircle{C} & B & A
			\end{tabular}
			\captionof{table}{Plurality-deterministic voting. $C $ is a Condorcet loser and gets elected.}
			\label{exm: plurality-deterministic Condorcet loser}		
			\columnbreak			
			\centering
			\begin{tabular}{rccc}
				Voter 1 & A & \encircle{C} & B \\
				Voter 2 & \encircle{C} & B & A \\
				Voter 3 & B & \encircle{C} & A \\
				Voter 4 & \encircle{A} & B & C \\
				Voter 5 & \encircle{A} & B & C \\
				Voter 6 & \encircle{C} & B & A\\			
				Voter 7 &  \encircle{B} & A & C 
			\end{tabular}	
			\captionof{table}{Plurality-uniform voting. $C $ is a Condorcet loser and the lone winner.}
			\label{exm: plurality-uniform Condorcet loser}		
		\end{multicols}	
	\end{minipage}
\noindent\rule{\columnwidth}{0.4pt}\vspace{2mm}
\begin{minipage}[h]{\columnwidth}
\begin{multicols}{2}
	
	\centering
	\begin{tabular}{rccc}
		Voter 1 & \encircle{C} & A &\encircle{B}\\
		Voter 2 & A & \encircle{C} & B \\
		Voter 3 & A & \encircle{C} & B \\		
		Voter 4 & \encircle{B} & A & C \\			
		Tie-Break & B & \encircle{C} & A
	\end{tabular}
	\captionof{table}{Approval-deterministic voting. $A $ is a Condorcet winner, $ C $ is elected.}
	\label{exm: approval-deterministic Condorcet winner loses}	
	\columnbreak
	\centering	
			\begin{tabular}{rccc}
				Voter 1 & \encircle{C} & A & \encircle{B} \\
				Voter 2 & \encircle{C} & A & \encircle{B} \\
				Voter 3 & A & \encircle{C} & B \\
				Voter 4 & A & \encircle{C} &  B\\
				Voter 5 & \encircle{B} & A &  C	
			\end{tabular}
		\captionof{table}{Approval-uniform voting. $A $ is a Condorcet winner, $ C $ is elected.}
		\label{exm: approval-uniform Condorcet winner loses}			
			
\end{multicols}	
\end{minipage}
\noindent\rule{\columnwidth}{0.4pt}\vspace{2mm}
\begin{minipage}[h]{\columnwidth}
	\begin{multicols}{2}
			\centering
		\begin{tabular}{rccc}
			Voter 1 & \encircle{B} & \encircle{C} & A\\
			Voter 2 & \encircle{B} & \encircle{C} & A\\
			Voter 3 & A & \encircle{C} & B \\
			Voter 4 & A & \encircle{C} & B \\		
			Voter 5 & \encircle{B} & A & C \\			
			Voter 6 & \encircle{A} & B & C \\			
			Tie-Break & \encircle{C} & A & B
		\end{tabular}
		\captionof{table}{Approval-deterministic voting. $C $ is a Condorcet loser gets elected.}
		\label{exm: approval-deterministic Condorcet loser wins}			
	\columnbreak	
	
	\begin{tabular}{rccc}
		Voter 1  & \encircle{B}& \encircle{C} & A \\
		Voter 2  & \encircle{B}& \encircle{C} & A \\
		Voter 3 & \encircle{A}  & B & C \\
		Voter 4 & \encircle{C} & A & \encircle{B} \\
		Voter 5 & A & \encircle{C} & B \\
		Voter 6 & A & \encircle{C} & B \\
		Voter 7 & \encircle{B} & A & C
	\end{tabular}
	\captionof{table}{Approval-uniform voting. $C $ is a Condorcet loser and the lone winner.}
	\label{exm: approval-uniform Condorcet loser wins}				
\end{multicols}	
\end{minipage}
\noindent\rule{\columnwidth}{0.4pt}\vspace{2mm}\\
\end{proof}

\section{Pareto-efficiency}
The following Pareto-dominated paradox scenario for the plurality-deterministic voting system was demonstrated in~\cite{CX}. Suppose we have one voter with preference order $ A\succ C\succ B $ and another voter with preference $ B\succ A\succ C $. Furthermore, assume the tie-breaking order is $ C\succ B\succ A $. The first voter should not vote for $ A $ as that would lead to the election of $ B $ which is his worst outcome. He therefore votes for $ C $, which is elected. Notice that  $ A $ Pareto dominates $ C $ in the voters' preferences, but the tie-breaking order gives preference to $ C $ over $ A $. We claim that the plurality-deterministic voting system is the only one of our four voting systems which allows this paradox with only three alternatives.

\begin{prop}\label{prp: no dominated with 3 alt}
	Let $ \cl{S} $ be a voting system with three alternatives and at least two voters. If $ \cl{S} $ is approval voting with either tie-breaking rules, or it is plurality voting with uniform tie-breaking, then the Pareto-dominated weak paradox cannot happen.
\end{prop}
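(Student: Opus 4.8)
The plan is to fix, after relabelling if necessary, a Pareto-dominated alternative $C$ together with an alternative $A$ that dominates it, so that every voter ranks $A \succ_{v_i} C$. With three alternatives this leaves each voter exactly one of the three preference types $A \succ B \succ C$, $A \succ C \succ B$, or $B \succ A \succ C$; in particular $C$ is every voter's least-preferred alternative except for the middle type $A \succ C \succ B$. The unified target I would prove is that in the SPE some alternative strictly out-polls $C$, i.e. $\text{votes}(C) < \max(\text{votes}(A),\text{votes}(B))$. This single statement settles all three systems at once: if $C$ is not among the highest-voted alternatives, it is excluded from the winning set under a uniform tie-break and is never the elected alternative under a deterministic tie-break, no matter how that tie-break ranks $C$. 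It is exactly the failure of this strict inequality --- $C$ tying for the top with a single vote --- that produces the plurality-deterministic paradox, which is precisely why that one system is the exception.

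First I would record two structural facts valid in every SPE of these systems. A voter of type $A \succ B \succ C$ or $B \succ A \succ C$ gains nothing by casting a vote for $C$: in approval, adding $C$ to a ballot can only raise $C$'s tally and hence can only move the outcome toward this voter's worst alternative, while in plurality such a voter would sooner name his favourite. Thus, up to outcome-equivalence, only the middle type ever supports $C$. Second --- and this is where the three systems part company from plurality-deterministic --- an $A$-preferring voter always has a way to back $A$ without paying the price that traps Voter~1 in the two-voter example of Section~4: in \emph{approval} voting he may add $A$ to whatever ballot he is already casting, raising $A$'s tally by one while surrendering nothing; in \emph{plurality with uniform tie-breaking} any tie whose winning set contains $A$ is already an acceptable outcome for him, since its top element is $A$, so the unfavourable resolution of a tie that forces the deterministic paradox simply cannot arise.

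With these facts in hand I would run backward induction and locate, along the SPE path, the last voter $v_k$ whose ballot is pivotal for the event ``$C$ belongs to the winning set'': holding the earlier play fixed and letting $v_{k+1},\dots,v_n$ respond according to the SPE, one ballot of $v_k$ leaves $C$ in the winning set while another removes it. By the first structural fact this $v_k$ is either a middle-type voter actually supporting $C$ (who can withdraw that support or reinforce $A$) or a voter able to lift $A$ or $B$ above $C$. By the second fact the deviating ballot can be chosen so that its continuation outcome places $A$ strictly ahead of $C$, an outcome that every $A$-preferring voter strictly prefers to any outcome in which $C$ shares or seizes the top. This contradicts the optimality of $v_k$'s SPE ballot and forces the strict inequality above.

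The main obstacle I anticipate is the propagation of a one-ballot deviation through the subsequent best responses: a single extra vote for $A$ cannot by itself overtake $C$ when $A$ trails by two or more, so the argument cannot be a purely local last-voter comparison and must be organised globally around the pivotal voter $v_k$. The technical heart is therefore a strategic monotonicity claim --- that casting an additional vote for $A$ never makes the SPE continuation worse for an $A$-preferring voter --- which I would establish separately for approval voting, where it rests on the freedom to enlarge a ballot, and for plurality with uniform tie-breaking, where it rests on ties containing $A$ being acceptable. Once this monotonicity is in place, the remainder is a finite, if slightly tedious, case analysis over the three voter types.
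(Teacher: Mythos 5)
There are genuine gaps here, and two of your load-bearing claims are false as stated. First, your unified target --- that in the SPE some alternative strictly out-polls $C$ --- cannot hold for deterministic tie-breaking, because only the SPE \emph{outcome} is unique, not the ballot profile. Take two voters, $v_1\colon A\succ C\succ B$ and $v_2\colon B\succ A\succ C$, approval voting with tie-break order $A\succ B\succ C$: whatever $v_1$ casts, $A$ is elected, so $v_1$ voting $\{A,C\}$ and $v_2$ voting $\{B\}$ is an SPE play, and the final tally is $(1,1,1)$ with $C$ tied for the maximum and no paradox. At best you could claim \emph{some} SPE profile satisfies the strict inequality (which would suffice by outcome uniqueness), but your argument never makes that existential move. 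Second, and more seriously, the contradiction at the pivotal voter $v_k$ rests on the assertion that every $A$-preferring voter strictly prefers an outcome placing $A$ strictly ahead of $C$ to any outcome in which $C$ shares the top. This is false exactly for the middle type $A\succ C\succ B$: under the paper's lifted order $\{A,C\}\succ\{A,B\}$ and $\{C\}\succ\{B\}$, and under deterministic tie-breaking $C$ beats $B$. A deviation that lifts $A$ above $C$ but lets $B$ win (or enter the winning set) makes a middle-type voter strictly \emph{worse} off --- and by your own first structural fact, the voter supporting $C$ is precisely a middle type. So the deviation does not contradict SPE optimality; the ``threat of $B$'' is the very mechanism that rationalizes supporting $C$ in the Conitzer--Xia two-voter example you cite, and your sketch never neutralizes it.

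Relatedly, the two monotonicity lemmas you defer (a $C$-last voter never benefits from voting for $C$; adding a vote for $A$ never hurts an $A$-top voter) are justified only by fixed-continuation reasoning, which is invalid in sequential voting: one changed ballot re-triggers every later voter's best response, and this is exactly what the paper's paradoxes exploit --- in Table~\ref{exm: approval-uniform pareto dominated} a voter who ranks $C$ last \emph{must} vote for $C$ in the SPE. Any correct proof must therefore kill the $B$-threat directly rather than assume monotonicity, and that is what the paper does: it reduces to equally many $A$-type and $B$-type voters (a strict-majority type simply elects its favourite), settles both uniform systems by exhibiting the SPE in which everyone votes for his top alternative (outcome $\{A,B\}$, then outcome uniqueness), and for approval-deterministic argues by contradiction: since every ballot of a $B$-type voter still elects $C$, the tie-break must rank $C$ above $B$; then, in the continuation where $B$-types vote for $A$, the last $A$-type voter can add $A$ to his ballot, the only outcome worse than $C$ for him is $B$, and $B$ can never out-poll $C$ nor win the tie against it --- a contradiction. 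Your plan, as written, does not reach this core difficulty.
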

\begin{proof}
	Assume that we have a voting scenario with alternatives $ \{A,B,C\} $ and $ C $ is Pareto dominated by $ A $. For $ X\in\{A,B,C\} $, we call a voter which has $ X $ at the top of his preference order an `$ X $-type' voter. Since $ C $ is Pareto dominated, there are only $ A $-type and $ B $-type voters. Now, in any voting system, if there is a strict majority of $ X $-type voters then they all just vote for $ X $ in an SPE (since then $ X $ is elected alone and that's the best possible result for them). We therefore assume that there is exactly the same number of $ A $-type and $ B $-type voters. If the tie-breaking is uniform then when all voters vote for their top alternative we get an SPE with the result $ \{A,B\} $, and no voter can get a better result. The remaining case is when $ \cl{S} $ is approval-deterministic. Assume for contradiction that $ C $ is selected in an SPE. Since all the $ B $-type voters rank $ C $ last, any other outcome is better for them. This means that every votes of the $ B $-type voters leads to the election of $ C $. We may assume then, that all the $ B $-type voters vote just for $ B $. Since half of the voters are $ B $-type, it must be that the tie-breaking prefers $ C $ over $ B $. Now assume that all the $ B $-type voters vote just for $ A $. It must be that no $ A $-type voter votes for $ A $, otherwise $ A $ would have got more than half of the votes and get elected. Moreover, since there the same number of $ A $-type and $ B $-type voters, it must be that all the $ A $-type voters vote for $ C $. Let $ v_i $ be the last $ A $-type voter. Assume that $ v_i $ adds $ A $ to his ballot and votes for $ \{A,C\} $. This cannot be a best vote, as explained above. The outcome after this vote must be $ B $, since this is the only outcome which is worse than $ C $ for this voter. But this is not possible since we already saw that the tie-breaking order prefers $ C $ over $ B $ and there cannot be more votes for $ B $ than for $ C $ (since we assume that all the $ A $-type voters already voted for $ C $). We have reached a contradiction, which means that $ C $ does not get elected in an SPE.
\end{proof}

As mentioned above, there is an example of a Pareto-dominated paradox with three alternatives when the voting system is plurality-deterministic. The next claim, then, settles the question of Pareto-efficiency for the systems with deterministic tie-breaking.

\begin{clm}\label{prp: dominate-paradox}{\label{prp: deterministic dominated}}
	If there are at least four alternatives, then there are examples of the Pareto-dominated paradox for the approval voting system with deterministic tie-breaking.
\end{clm}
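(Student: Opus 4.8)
The plan is to exhibit a single explicit voting profile on four alternatives $\{A,B,C,D\}$ together with a deterministic tie-breaking order in which $A$ Pareto-dominates $D$ (every voter ranks $A$ above $D$) and yet $D$ is the unique SPE winner; the case of more than four alternatives then follows by padding. I would present the profile in a table and circle the equilibrium approvals, exactly in the style of Claim~\ref{prp: plurality Condorcet winner and underdog}, and justify the circled ballots by backward induction from the last voter.

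The design principle is a chain of credible threats that deters the voters who prefer $A$ to $D$ from ever dropping their approval of $D$. Concretely I would (i) place $D$ at the very top of the tie-break order, so that $D$ wins any tie for the most approvals, and (ii) arrange the preferences so that every ballot by an early, $A$-favouring voter that gives $A$ a realistic chance to win simultaneously enables some later voter to install $B$ or $C$ --- an outcome that is strictly worse for that early voter than $D$ is. The fourth alternative is essential here: the proof of Proposition~\ref{prp: no dominated with 3 alt} shows that with only three alternatives the enforcing voter runs into a contradiction, since the tie-break order that protects the dominated alternative simultaneously blocks the punishing alternative from ever overtaking it. The extra alternative is exactly what supplies a punishment that is both credible and painful while leaving the protection of $D$ intact.

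The verification proceeds in the usual backward order. First I would check the two structural facts, that $A \succ_{v_i} D$ for all $i$ and that $D$ tops the tie-break. Then, starting from the last voter and using that deterministic tie-breaking always returns a single winner (so each voter has a well-defined best outcome), I would compute each voter's optimal ballot as a function of the running approval tally and propagate these responses upward. At each $A$-favouring voter the only comparison that binds is ``keep $D$ alive'' versus ``deviate and trigger the punishment alternative,'' and the preferences are chosen so that the former wins; the circled profile is therefore an SPE with outcome $D$. Finally, for $m>4$ alternatives I would append extra alternatives that every voter ranks below $D$ and that are placed at the bottom of the tie-break: since no voter ever approves them they collect zero votes, never win, and leave the entire backward induction unchanged.

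The main obstacle is the size of the strategy space: under approval voting each voter chooses among all $2^{4}$ subsets rather than among ``one alternative or abstain,'' so the case analysis in the backward induction is far larger than in the plurality examples. The art is to choose preferences and a tie-break that collapse this analysis --- so that for each relevant voter the punishment outcome dominates every deviating subset at once --- leaving only a handful of genuinely distinct cases. I expect the delicate point to be subgame-perfection of the threat itself: one must verify that the late ``enforcer'' actually prefers to carry out the punishment on the deviation path, not merely that doing so would hurt the deviator.
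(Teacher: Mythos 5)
There is a genuine gap here: the claim is existential, so its proof \emph{is} a concrete voting profile together with the backward-induction verification that the dominated alternative wins in the SPE --- and you never produce one. Your proposal describes design principles (``a chain of credible threats''), a verification plan, and the expected difficulties, but the entire mathematical content is deferred to ``the art is to choose preferences and a tie-break that collapse this analysis.'' That is precisely the part that cannot be waved at: you yourself identify the delicate point (that the late enforcer must actually prefer to carry out the punishment on the deviation path), and without an explicit table of preferences, circled ballots, and the case check, there is no way to confirm that any profile satisfying your constraints exists. A reviewer could not reconstruct the example from your text; they would have to invent it.

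Moreover, the one concrete design commitment you do make points in a different direction from what the paper actually does, so one cannot even say your plan is ``the paper's example minus the details.'' You place the dominated alternative at the \emph{top} of the deterministic tie-break order so that it is ``protected'' in ties --- mimicking the Conitzer--Xia plurality example. In the paper's construction (Table~\ref{exm: approval-deterministic pareto dominated}) the dominated alternative $C$ is \emph{last} in the tie-break order $D\succ A\succ B\succ C$ and wins outright with $4$ of $5$ approvals, no tie at all; the tie-break-favoured alternative $D$ serves only as the \emph{punishment} threatened against Voters~1 and~2 (for whom $D$ is worst), while Voters~3 and~4 are held in line by the separate threat of $B$. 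Whether an approval-deterministic example with the dominated alternative protected at the top of the tie-break even exists is not established by your outline --- Proposition~\ref{prp: no dominated with 3 alt} shows that this exact protection mechanism breaks down under approval voting with three alternatives, so it is not safe to presume it revives with four. To close the gap you must either exhibit and verify a profile realizing your design, or switch to the paper's design and verify that.
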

\begin{proof}
In the scenario of Table~\ref{exm: approval-deterministic pareto dominated} $ C $ is Pareto dominated by $ A $. The first two voters vote for $ \{C,B\} $ thus creating a threat on Voters 3 and 4 that if they don't vote for $ C $ as well, then $ B $ will be elected. If Voters~1 and~2 change their vote (for example, for $ \{A,B\} $), then Voter 3 will vote for $ \{D,A\} $, Voter 4 will vote for $ D $ and Voter 5 will be forced to vote for $ D $ as well, making $ D $ the winner.\\

\begin{minipage}[h]{\columnwidth}
	\centering
		\begin{tabular}{l@{\hskip 0.1in}l@{\hskip 0.1in}l@{\hskip 0.1in}l@{\hskip 0.1in}l}
	Voter 1     & A & \encircle{C} & \encircle{B} & D \\
	Voter 2     & A & \encircle{C} & \encircle{B} & D \\
	Voter 3     & D & A & \encircle{C} & B \\
	Voter 4     & D & A & \encircle{C} & B \\
	Voter 5     & \encircle{B} & D & A & C \\
	Tie-Break & D & A & B & \encircle{C}
\end{tabular}
\captionof{table}{Approval-deterministic voting. $C $ is Pareto dominated by $ A $ and gets elected.}
\label{exm: approval-deterministic pareto dominated}		
\end{minipage}
\noindent\rule{\columnwidth}{0.4pt}\vspace{2mm}\\
\end{proof}

\begin{clm}\label{prp: approval dominated}
	If there are at least four alternatives, then there are examples of the Pareto-dominated weak paradox for both plurality and approval voting systems with uniform tie-breaking.
\end{clm}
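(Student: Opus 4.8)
The plan is to prove the claim exactly as the earlier results are established: by exhibiting, for each of the two systems, an explicit four-alternative voting profile (a table) in which an alternative $C$ is Pareto dominated by $A$ yet belongs to the SPE winning set, and then verifying the equilibrium by backward induction. Proposition~\ref{prp: no dominated with 3 alt} already shows that three alternatives never suffice for approval voting or for plurality with uniform tie-breaking, so the fourth alternative $D$ must do genuine work; following the template of Table~\ref{exm: approval-deterministic pareto dominated}, I would let $D$ play the role of an off-path ``doomsday'' outcome. Because the tie-breaking is uniform the outcome is a winning set rather than a single alternative, so I aim only for the weak paradox: the profiles are designed so that the SPE outcome is a multi-element set such as $\{B,C\}$ that contains the dominated alternative $C$.

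The mechanism I would reuse is the threat structure of Table~\ref{exm: approval-deterministic pareto dominated}. A front bloc of voters with order $A\succ C\succ B\succ D$ is prevented from promoting their true favourite $A$ by the fact that any such promotion lets the tail of the electorate steer the outcome to $D$, which is their worst alternative; a decoy alternative $B$ is used to coerce the $C$-supporting middle bloc, since deviating from $C$ exposes them to the election of $B$. For the approval-uniform table one natural device is to realise the effect of a deterministic tie-break by appending a final voter whose preference order equals the former tie-break priority $D\succ A\succ B\succ C$ --- the same trick the paper uses to read the Condorcet tables simultaneously as deterministic and uniform --- noting that this appended voter, being able to \emph{add} an alternative to the set of co-winners, is exactly why the uniform version yields only the weak form. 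For the plurality-uniform table I would build a parallel profile in which every ballot is a single alternative and the $C$-bloc is pinned to $C$ by the same $B$/$D$ threats.

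Verification in both cases is a backward induction over the extensive-form game, where at each node a voter's best response is computed with respect to the lifted total order on subsets of $A$ from Section~\ref{sec: classical voting}. The steps are: (i) confirm that $C$ is Pareto dominated by $A$, i.e.\ every voter (including the appended one) ranks $A$ above $C$; (ii) on the equilibrium path compute the vote tallies and check that $C$ lies in the winning set; and (iii) for each voter, from last to first, check that no alternative ballot --- adding or deleting approvals in the approval system, switching the single vote or abstaining in the plurality system --- produces a lifted-better set, using the displayed off-path threats ($B$ elected, or $D$ elected) to eliminate the tempting deviations.

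The main obstacle is step~(iii), where uniform tie-breaking creates profitable deviations with no analogue in the deterministic proofs, forcing the tables to be recalibrated rather than reused verbatim. Comparisons now run through the lexicographic lifting (top alternative, then cardinality, then recursion), so that, for instance, an approval voter can improve by dropping a ballot for the decoy $B$ to shrink $\{B,C\}$ to $\{C\}$, and --- more dangerously --- a front-runner might try a vote that promotes $A$ and thereby \emph{ejects} $C$ from the winning set entirely. The crux of the construction is therefore to keep $C$ tied for the maximal tally on the path, so that no single voter can remove it, while holding $A$'s tally strictly below unless a voter triggers the $D$-threat; confirming that every such deviation lands on a set the lifting ranks no higher than the equilibrium set is the real work and is what dictates the careful, system-specific choice of the two profiles.
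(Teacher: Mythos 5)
The approach you outline---exhibit one explicit four-alternative profile per system and verify the SPE by backward induction---is exactly how the paper proves this claim, but your proposal stops at the outline and never produces the objects that constitute the proof. The statement is existential; its entire mathematical content is the two concrete tables together with the deviation-by-deviation verification, and you explicitly defer both (``the real work \ldots is what dictates the careful, system-specific choice of the two profiles''). As it stands there is no proof: no profile is written down, no equilibrium path is computed, and no deviation is ruled out. This is not a presentational shortfall but a genuine gap, since nothing in your text guarantees that profiles with the announced threat structure exist at all once uniform tie-breaking is in force---which is precisely the difficulty you yourself flag in step~(iii).

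Your one concrete device---convert Table~\ref{exm: approval-deterministic pareto dominated} to the uniform setting by appending a last voter whose preference order is the old tie-break priority $D\succ A\succ B\succ C$---is plausible but unverified, and the paper's actual constructions suggest it is not a routine fix. Under uniform tie-breaking the appended voter is a strategic player who can force ties and enlarge the winning set, so the on-path outcome of Table~\ref{exm: approval-deterministic pareto dominated} changes (the last voter would pull $B$ into a tie with $C$), and every off-path threat used in the deterministic verification (e.g.\ ``Voter~5 will vote for $\{B,C\}$ so that Voter~6 faces a choice between $B$ and $C$'') has to be re-established against the lifted order on sets. The paper instead builds two structurally different profiles (Tables~\ref{exm: plurality-uniform pareto dominated} and~\ref{exm: approval-uniform pareto dominated}): in both, $C$ is Pareto dominated by $D$ rather than by $A$, the equilibrium winning set is $\{A,C\}$, and the coercion works through a mid-sequence voter (Voter~4, resp.\ Voter~5) forcing the tail voters onto $A$---not through an appended tie-breaker. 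To complete your proof you would need to write out your two profiles and carry out the full backward-induction check; until then the claim remains unproved under your proposal.
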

\begin{proof}
	We start with the plurality-uniform voting (Table~\ref{exm: plurality-uniform pareto dominated}). Voters~5, 6 cannot get a better outcome, even if they co-operate. Voter~4 must vote for $ C $ in order to force Voters~5, 6 to vote for $ A $.\footnote{A vote for $ A $ will lead to the outcome $ \{A,B,C\} $ which is worse for him.} If either Voter~2 or Voter~3 change their ballots, Voter~4 will not be able to force Voters~5, 6 to vote for $ A $; instead, Voters~4, 5 and~6 will vote for $ B $ which will lead to a worse outcome for Voters~2, 3. If Voter~1 change his voter, for example for $ A $, then the rest of the voters will all vote for their top alternative, which would lead to the outcome $ \{A,B,D\} $, which is worse for Voter~1.\\
	
	In the case of approval-uniform voting (Table~\ref{exm: approval-uniform pareto dominated}). Voter~5 votes for both $ A $ and $ C $ so that Voter~6 is not able to get $ B $ in the winning set, and thus is forced to vote for $ A $. Voter~4 has no vote which changes the outcome in his favour. Voters~2, 3 cannot get $ D $ in the winning set because of the threat of Voters~4, 5 and~6 to vote for $ B $. Voter~1 must vote for $ C $ and allow it to be part of the winning set, since this is what allows Voter~5 to force Voter~6 to vote for $ A $ as explained above.\\

\noindent
\begin{minipage}[h]{\columnwidth}	
	\begin{multicols}{2}
		\centering
		\begin{tabular}{lcccc}
			Voter 1 & A & B & D & \encircle{C} \\
			Voter 2 & D &  \encircle{A} & C & B \\
			Voter 3 & D & \encircle{C} & B & A \\
			Voter 4 & A & B & D & \encircle{C} \\
			Voter 5 & B & \encircle{A} & D & C \\
			Voter 6 & B & D & \encircle{A} & C
		\end{tabular}
	\captionof{table}{Plurality-uniform voting. $C $ is Pareto dominated by $ D $. The winning set is $ \{A,C\} $.}
	\label{exm: plurality-uniform pareto dominated}		
		\columnbreak
		
		\begin{tabular}{lcccc}
			Voter 1 &  \encircle{A} & \encircle{B} & D & \encircle{C} \\
			Voter 2 &  D  & \encircle{C} & B & A \\
			Voter 3 & D & \encircle{C} &  \encircle{A} & B \\
			Voter 4 &\encircle{B} & D & C & A \\
			Voter 5 &   \encircle{A}  & B & D & \encircle{C} \\
			Voter 6 &  B &  D  &  \encircle{A} & C
		\end{tabular}
	\captionof{table}{Approval-uniform voting. $C $ is Pareto dominated by $ D $. The winning set is $ \{A,C\} $.}
	\label{exm: approval-uniform pareto dominated}				
	\end{multicols}		
\end{minipage}
\noindent\rule{\columnwidth}{0.4pt}\vspace{2mm}\\
\end{proof}
The question of whether systems with uniform tie-breaking are weakly Pareto-efficient remains open.

\bibliographystyle{splncs03}
\bibliography{paper}

\end{document}